%
%
%
%
%
%
%
%

\documentclass[11pt]{article}

 \usepackage[linesnumbered,ruled,vlined]{algorithm2e}
 
 \usepackage[utf8]{inputenc}
\usepackage[margin=1.4cm]{geometry}
\usepackage{graphicx}
\usepackage{color}
\usepackage{amssymb,amsmath}
\usepackage{authblk}
\usepackage{tikz}
\usetikzlibrary{shapes}

\usepackage{tkz-graph}
\usepackage{multirow}
\usepackage{enumitem}
\usepackage[normalem]{ulem}


\usepackage{amsthm}

\newtheorem*{rep@theorem}{\rep@title}
\newcommand{\newreptheorem}[2]{%
\newenvironment{rep#1}[1]{%
 \def\rep@title{#2 \ref{##1}}%
 \begin{rep@theorem}}%
 {\end{rep@theorem}}}
\makeatother

\newreptheorem{theorem}{Theorem}
\newreptheorem{lemma}{Lemma}
\newreptheorem{observation}{Observation}
\newreptheorem{corollary}{Corollary}

\newtheorem{definition}{Definition}
\newtheorem{lemma}{Lemma}
\newtheorem{corollary}{Corollary}
\newtheorem{theorem}{Theorem}


\newcommand{\Dest}{t} 
\newcommand{\U}{s} 
\newcommand{\V}{V} 
\newcommand{\SSS}{\mathcal{S}} 
\newcommand{\Objection}{\mathcal{P} } 
\newcommand{\fctV}{v} 

\newcommand{\ContreObjection}{\mathcal{Q} } 
\newcommand{\ShortestPath}{\mathcal{SP}} 
\newcommand{\C}{\mathcal{C}} 
\newcommand{\Cost}{c} 
\newcommand{\imputation}{x} 
\newcommand{\price}{p} 

\newcommand{\X}{X} 

\newcommand{\owner}{owner}
\newcommand{\player}{a}
\newcommand{\announcedprice}{d}

\begin{document}

\title{Detecting service provider alliances\footnote{This work was supported by projet CAPES-COFECUB MA 828-15 CHOOSING}}
\author[1]{Johanne Cohen}
\author[2]{Daniel Cordeiro}
\author[3]{Loubna Echabbi}
\affil[1]{LRI, CNRS, Universit\'e Paris Sud, Universit\'e Paris-Saclay, France}
\affil[2]{Universidade de S\~ao Paulo, Brazil}
\affil[3]{STRS Lab., INPT, Morocco}
%

\maketitle

\begin{abstract}
  We present an algorithm for detecting service provider alliances. To perform this, we modelize a cooperative
  game-theoretic model for 
  competitor service providers. A choreography (a peer-to-peer service
  composition model) needs a set of services to fulfill its 
  requirements. Users must choose, for each requirement, which service
  providers will be used to enact the choreography at lowest cost. Due
  to the lack of centralization, vendors can form alliances to control
  the market. We propose a novel algorithm capable of detecting
  alliances among service providers, based on our findings showing
  that this game has an empty core, but a non-empty bargaining set.
\end{abstract}
 

\section{Introduction}

The current ubiquity and pervasiveness of the Internet has been
leading researchers and practitioners to imagine the Future
Internet~\cite{future-internet}. The Future Internet, as a particular
case of Ultra-Large Scale (ULS) systems, constitutes a futuristic
vision of a yet-to-come Internet whose scale changes everything.

On such scenario, systems are usually modeled and described by a
service-oriented architecture that are completely
distributed~\cite{choreos}. Ultra-Large Scale systems will require the
transition from the current service \emph{orchestration} model for
service composition --- where the ensemble of services are composed as
an executable business process, controlled by a single party (the
orchestrator) --- to the service \emph{choreography} composition
model~\cite{orc-vs-chor} --- that describes a non-executable protocol
for peer-to-peer interactions from several different parties. In other
words, distributed scheduling devised by the applications itself
(according to their own performance objectives) will be preferred.

This new model is not only more robust and scalable, but also more
collaborative. The service choreography model enforces
interoperability and loose coupling by reflecting obligations and
constraints between parties. Each party involved will be required to
clearly state its role in a standardized manner.

Actually, some improvements on the collaboration of service providers
are already a reality for some specific services, such as Cloud
Computing platforms. Frameworks like Eucalyptus~\cite{eucalyptus},
OpenNebula~\cite{opennebula}, and OpenStack~\cite{openstack} already
allows developers to choose different cloud providers without any
additional change on their software. A cloud computing user can choose
(at any time) one or more providers from a large set of options. The
choice can be driven by different non-functional criteria such as
total cost, quality-of-service, etc.

Beyond the complexities of the management of composite services
(design, provisioning, etc.), we consider the problem from an economic
point of view. Since service vendors are not regulated, both
cooperative and non-cooperative behavior may be expected. This can
potentially lead to the formation of alliances --- i.e., the formation of
groups of similar independent parties, who join together to control
prices and/or limit competition.


In this work we use tools and techniques from cooperative game
theory~\cite{courcoubetis} in order to analyze the stability and the
alliance formation on the Future Internet scenario; a
problem we call the \emph{choreography enactment pricing game}.

\section{Problem statement}
\label{sec:problem-statement}

Providing different types of services to end-users is a very promising
business model. Companies can provide access to complex systems as a
service, where end-users can pay to execute an operation. It can scale
from small companies (e.g., a travel company that sells airline
tickets for end-users) --- to large companies that offers a huge
number of computational resources (like public cloud
computing providers, that provide their services in a pay-as-you-go
manner.)

In order to execute a complex choreography, a user must select a set
of service vendors that commercialize the services that fulfills all
choreography requirements. A same type (or role) of service can be
offered from different vendors at different prices. Also, the same
vendor can commercialize more than one type of service. It is the
end-user that must choose (either manually or on an automated manner)
which service provider will be in charge of the execution of services
that fulfills each role. This process is known in systems literature
as \emph{choreography enactment}~\cite{cordeiro14nca}.

We use a graph of precedence constraints to model all possible ways to
enact a choreography. The vertices of the graph indicate all known
public services that fulfill the needs of the user and the edges
represents the order in which those services must be executed.

More formally, all possible combinations of services that can enact a
given choreography is modeled as a directed acyclic graph
$G=(\V \cup \{\U, \Dest\},\, E)$, such that $w \in \V$ is a service
that users can pay to use, and the existence of an edge
$\{w, x\} \in E$ indicates that the service $w$ must be executed
before $x$. Let $m$ be the number of public services available to the
user.

Vertices $\mathbf{\U}$ and $\mathbf{\Dest}$ are artificial vertices added to represent,
respectively, the \emph{start} and the \emph{end} of the choreography. Thus, for all
services $x$ with in-degree equal to zero, we have an additional edge
$\{\U, x\}$. Similarly, for all services $w$ with out-degree zero, we
also have an edge $\{w, \Dest\}$.

A successfully enacted choreography can be represented by a path from
$\U$ to $\Dest$ on the graph $G$.
 
   \begin{figure}
 \begin{center} \begin{tikzpicture} %
     \node[draw,rectangle] (4)  at (2,2) {Cloud Provider 2};
     \node[draw,rectangle] (2) [above of=4] {Cloud Provider 1};
      \node[draw,rectangle] (3) [below of=4] {Cloud Provider 3};
     \node[draw,circle]  (1) at (-2,2)   {$\U$};
    \node[draw,rectangle] (5) at (6,3)  {Database};
     \node[draw,rectangle] (6)  at (6,1)  {Storage service};
     \node[draw,circle]  (7) at (10,2)  {$\Dest$};
     \draw[->] (1) -- (2);
     \draw[->] (2) -- (5);
     \draw[->] (5) -- (7);
     \draw[->] (1) -- (3);
     \draw[->] (3) -- (6);
     \draw[->] (6) -- (7);
     \draw[->] (4) -- (5);
     \draw[->] (1) -- (4);
     \draw[->] (4) -- (6);
 \end{tikzpicture} \end{center}
  \caption{Choreography enactment pricing game with empty core.}
  \label{fig:simple-example}
\end{figure}
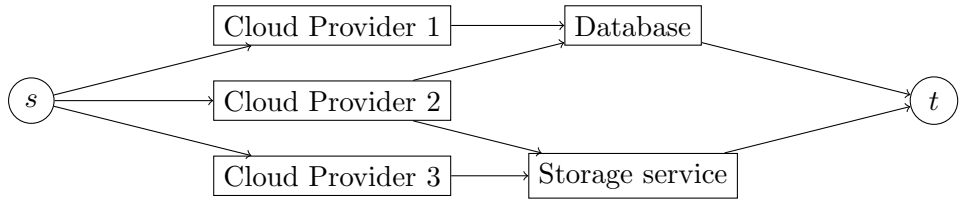

Figure~\ref{fig:simple-example} shows a graph depicting different forms of enacting a given choreography. In this case, the user can choose among three different \emph{cloud providers} and two data serialization services. Any path $\U$--$\Dest$ is valid, but a user would choose the cheaper one.

Users must pay a price to the service provider in order to use
its services. Each service node $w$ charge $\Cost_w$ for its use. This
value is fixed and known in advance by all providers. We define the
costs of vertices $\U$ and $\Dest$ as being zero.

The total price paid by a user is straightforward defined as the sum
of the costs of the vertices in the path $\U$--$\Dest$ chose by the
user. Let $\Cost$ be the vector $(\Cost_1, \dots , \Cost_m)$ of all
services costs and $\Cost^{-k}$ the vector
$(\Cost_1, \dots, \Cost_{k-1}, \Cost_{k+1}, \dots, \Cost_m)$ of costs
for all services except service $k$.

Users also have a budget, i.e., to set the maximal price $\price$ that
can be payed to realize an enactment. We assume that a user will
always choose the path with minimal cost. If this minimal cost is
greater than $\price$, then the user simply refuses to execute its
choreography.

From the service vendors point of view, we have a set of services that
are offered to users at some given price (its operating costs plus a
profit). A set of service vendors $\SSS$ compete among each other to offer those services. We use the function $\owner: \V \mapsto \SSS$ to denote the mapping from a given service $w$ to
its vendor, $\owner(w)$.

In a free market, however, nothing prevents vendors from creating
alliances. Those alliances could be used to control the market by
fixing the smallest price that would lead a user to choose their own
services (\emph{price wars}). The price can be decreased as long as
the total operating cost (the cost to execute the services) of the
coalition can be covered by the price payed by the user, and the
profit can then be split between the members of the alliance --- a
concept known in game-theory as \emph{transferable utility}.

To formalize these alliances (or coalitions), we use the notion
of \emph{cooperative games}. We assume the following hypothesis for
our model:

\begin{itemize}
\item a path (a combination of services) is offered to the user
  at a unique price;

\item a user always selects a path with smaller cost;


\item the revenue sharing mechanism must ensure \emph{coalition
  stability}. In other words, providers are never tempted to leave the
  coalition; 

\item the coalition must offer at least one valid path, with at least two services, 
  composed only by providers belonging to that coalition --- a unitary path 
%
  would create a monopoly controlled by its owner;

\item the coalition can profit only if its path is guaranteed to be
  selected. Either there is no other valid path costing lesser than
  the maximum price $\price$, or other paths are definitely more
  expensive.
\end{itemize}

Before describing the game that models those properties, we will
recall some important concepts from game theory that will be used in
this paper.

\section{Cooperative game theory \cite{coalitiongame}}%

\subsection{Transferable utility games and Characteristic functions}

A \emph{cooperative game with transferable utility} (TU game) is a
pair $(\SSS, \fctV)$ where $\SSS = \{1, \dots, n\}$ is a finite set of
players and a \emph{characteristic function}
$\fctV: 2^{|\SSS|} \to \mathbb{R}$ which associates each subset
$\C \subseteq N$ to a real number $\fctV(\C)$ (such that
$\fctV(\emptyset) = 0$). Each subset $\fctV(\C)$ of $\SSS$ is called a
\emph{coalition}. The function $\fctV$ is a \emph{characteristic
  function} of the game $(\V, \fctV)$ and the \emph{value} of
coalition $\C$ denoted by $\fctV(\C)$ is the value that $\C$ could
obtain if they choose to cooperate. In TU games, the value of a
coalition can be redistributed among its members in any possible way.

\subsection{Revenue sharing mechanism}

The challenge of a revenue sharing mechanisms is to find how to split the
payoff $\fctV(\C)$ among the players in $\C$ while ensuring the
stability of the coalition. A vector
$x=(\imputation_1,\dots,\imputation_{|\V|})$ is said to be a
\emph{payoff vector} for a $k$-coalition $\C_1, \dots, \C_k$ if
$\imputation_i \leq 0$ for any $i \in \V$ and
$\sum \limits_{i\in \C_j} \imputation_i \leq \fctV(\C_j)$ for any
$j \in \{1, \dots, k\}$. We will focus on some particular payoff vectors,
namely \textit{imputations}.

\begin{definition}
A payoff vector $\imputation$ for a $k$-coalition $\C_1, \dots, \C_k$ is said to be an \emph{imputation} if it is
\emph{efficient}, --- i.e., $\sum \limits_{i\in \C_j} \imputation_i \leq \fctV(\C_j)$
for any $j \in \{1, \dots, k\}$ --- and if it satisfies the individuality
rationality property --- i.e., $\imputation_i \geq \fctV(\{j\}) $ for any player $j \in \SSS$.
\end{definition}

The objective is to find a \emph{fair}
distribution of the value of the coalition (the payoff of each player
corresponds to his actual contribution to the coalition) and also to
ensure a \emph{stable} coalition in such a way that no player or
subset of players have incentive to leave the coalition.





\paragraph{The Core}\cite{core}

Let $(\SSS, \fctV)$ be a cooperative game and $\imputation$ a payoff
vector of this game.  A pair $(\Objection, y)$ is said to be
\emph{objection of $i$ against $j$} if:

\begin{itemize}
\item $\Objection$ is a subset of $\SSS$ such that $i \in \Objection$
  and $j \notin \Objection$ and

\item if $y$ is a vector in $\mathbb{R}^\SSS$ such that
  $y(\Objection) \leq \fctV(\Objection)$, for each
  $ k \in \mathcal{P}$, $y_k \geq \imputation_k$ and
  $y_i > \imputation_i$ (agent $i$ strictly benefits from $y$, and the
  other members of $\Objection$ do not do worse in $y$ than in
  $\imputation$).
\end{itemize}

In this case, we say that Imputation $\imputation$ is \emph{dominated} by $y$.

\begin{definition}
  \label{core_def}
  \emph{The core} is the set of imputations for which there is no
  objection. That is imputations that are not dominated.

  Another equivalent definition,  states that the core is a set of payoff allocations $x\in\mathbb{R}^\SSS$ satisfying:
  \begin{itemize}
  \item Efficiency: $\sum\limits_{i\in \SSS}\imputation_i=\fctV(\SSS)$;
  \item Coalitional rationality: $\sum\limits_{i\in \C} \imputation_i\geq \fctV(\C), \forall \C \subseteq \SSS$.
  \end{itemize}
\end{definition}

When the core is nonempty the grand coalition, the coalition
created by all service providers, $\SSS$ is stable as there is no
objection threatening to leave
it.  


The core is a strong concept. Sometimes the core is empty and this
requirement should be relaxed. The following concept considers
situations where objections are not justified and may be
\textit{neutralized}.

\paragraph{The bargaining set.}\cite{Bargainingset,BargainingsetObjection}

A pair $(\ContreObjection, z)$ is said to be a \emph{counter-objection} to an objection
$(\Objection, y)$ if:

\begin{itemize}
\item $\ContreObjection$ is a subset of $\SSS$ such that
  $j \in \ContreObjection$ and $i \notin \ContreObjection$ and

\item if $z$ is a vector in $\mathbb{R}^\SSS$ such that
  $z(\Objection) \leq \fctV(\Objection)$, for each
  $k \in \ContreObjection \backslash \Objection$,
  $z_k \geq \imputation_k$ and, for each
  $k \in \ContreObjection \cap \Objection$, $z_k \geq y_k$ (the
  members of $Q$ which are also members of $\Objection$ get at least
  the value promised in the objection).
\end{itemize}

Let $(\SSS, \fctV)$ be a game with a coalition structure. A vector
$x \in \mathbb{R}^\SSS$ is \emph{stable} iff for each objection at
$\imputation$ there is a counter-objection.

\begin{definition}
  The bargaining set $\mathcal{B}((\SSS, \fctV))$ of a cooperative game
  $(\SSS, \fctV)$ is the set of stable payoff vectors that are
  \textit{individually rational}, that is,
  $\imputation_i \geq \fctV(\{i\})$.
\end{definition}

Note that it is sufficient to use the notion of imputations since the
payoffs are individually rational.

The bargaining set concept requires objections to be immune to
counter-objections, otherwise they are not considered as credible
threats.

Now, we will go back to our problem and try to define appropriate
characteristic function that reflects the outcome expected from
coalition formation as described earlier in the problem statement
section (Section~\ref{sec:problem-statement}). Note, however, that
even if the core is empty, the bargaining set is not.

\section{The choreography enactment pricing game}


The \emph{choreography enactment game} models the cooperative game played by service providers. Their main objective is to form coalitions in order to create the best (cheaper) choice for a user that wants to deploy (enact) a given choreography.
The utility (profit) of a service provider 
is given as a function of the
price $\announcedprice_w$ that the vendor announced for service $w$
--- if this coalition is chosen by the user --- and the operating cost of
this service. Formally, the utility $\fctV(w)$ received by service $w$ (vertex in graph $G$) 
is given by:

\begin{equation}
\label{eq:fctV:vertex}
 \fctV(w) = \begin{cases}
    \price_w - \announcedprice_w & \text{if } w \text{ is in the chosen path} \\
                   0 & \text{otherwise} \end{cases}
\end{equation}

Now, from the utility $\fctV_w$ received by service $w$, we derive the    utility $\fctV({\player})$ received by player $\player$ (set of vertices in graph $G$):

\begin{equation}
\label{eq:fctV}
 \fctV(\player) = \displaystyle \sum_{w \in \V : \owner(w) = \player}  \fctV(w)
 \end{equation}


We can generalize this notation for a coalition. If $X$ is a set of service providers forming a coalition, its imputation can be given by:
\begin{equation}
\label{eq:fctV-coalition}
\fctV(X) = \sum_{\player \in X} \fctV(\player) 
\end{equation}

We note the path leading to the lowest cost offered by $\X$ as
$\ShortestPath^{\X}$ and the cost of the shortest path as
$\Cost^{\X}$. Similarly, we note the shortest
path/lowest-cost $k$-avoiding path from $\U$ to $\Dest$ as
$\ShortestPath^{-k}$ and its total cost as $\Cost^{-k}$. If no such
path exists, its cost is defined to be $\infty$.



This characteristic function satisfies the properties stated in
Section~\ref{sec:problem-statement}. A coalition must offer at least
one valid path, otherwise its value is 0. The coalition may have some
profit only if its path is guaranteed to be selected. That is, outside
the coalition, there is no valid path or the other paths are more
expensive --- i.e., $\Cost^{\X} < \Cost^{-\X}$. Thus, the offered
price must be better than the total cost of the shortest path outside
the coalition ($\Cost^{-\X}$), if it exists, or the maximum price that
the user is ready to pay ($\price$) otherwise. We can derive that:

\begin{equation}
\label{eq:fctV:vertex}
 \fctV(w) = \begin{cases}
    \min(\price, \Cost^{-\X}) - \Cost^{\X} & \text{if } \Cost^{-\X} \ge \Cost^{\X} \\
                   0 & \text{otherwise}
            \end{cases}
\end{equation}

\begin{corollary}
  The value of the grand coalition $\SSS$ in the choreography
  enactment game is given by:
  $\fctV(\SSS)=(\price-\Cost_{\ShortestPath})$
\end{corollary}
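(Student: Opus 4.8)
The plan is to obtain the result as a direct specialization of the derived characteristic function for a coalition $\X$ --- namely $\fctV(\X) = \min(\price, \Cost^{-\X}) - \Cost^{\X}$ when $\Cost^{-\X} \ge \Cost^{\X}$, and $0$ otherwise --- evaluated at the grand coalition $\X = \SSS$. Everything reduces to computing the two quantities $\Cost^{\SSS}$ and $\Cost^{-\SSS}$ for this particular coalition and substituting them into the formula.

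First I would observe that since $\SSS$ is the set of \emph{all} service providers, the map $\owner$ sends every service vertex of $G$ into $\SSS$; hence every $\U$--$\Dest$ path is composed exclusively of services owned by members of $\SSS$. Consequently the cheapest path the grand coalition can offer coincides with the globally cheapest path, so $\ShortestPath^{\SSS} = \ShortestPath$ and therefore $\Cost^{\SSS} = \Cost_{\ShortestPath}$.

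Next I would compute $\Cost^{-\SSS}$, the cost of the cheapest path that uses no service owned by a member of $\SSS$. Since there are no providers outside $\SSS$, no such path exists, and by the stated convention that the cost of a nonexistent path is $\infty$ we get $\Cost^{-\SSS} = \infty$. The selection condition $\Cost^{-\SSS} \ge \Cost^{\SSS}$ is thus trivially satisfied, and because $\min(\price, \infty) = \price$, plugging into the characteristic function yields $\fctV(\SSS) = \price - \Cost_{\ShortestPath}$, as claimed.

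The only point needing care --- and where I expect a tacit hypothesis rather than a genuine obstacle --- is feasibility. The identity presumes that a $\U$--$\Dest$ path actually exists (so that $\Cost_{\ShortestPath}$ is finite) and, implicitly, that enactment is worthwhile, i.e. $\Cost_{\ShortestPath} \le \price$. If the cheapest path already exceeds the budget $\price$, the user refuses to enact and the grand coalition earns nothing, so the bare expression $\price - \Cost_{\ShortestPath}$ would spuriously report a negative ``profit''. I would therefore flag explicitly that the stated value holds under the standing assumption that the instance admits a profitable enactment; otherwise the right-hand side should be understood as truncated at $0$.
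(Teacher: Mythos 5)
Your argument is essentially identical to the paper's: the authors likewise observe that no path exists outside the grand coalition, so $\min(\price, \Cost^{-\SSS}) = \price$, and that the cheapest path inside $\SSS$ is the global shortest path $\ShortestPath$, then substitute into the characteristic function. Your additional caveat about requiring $\Cost_{\ShortestPath} \le \price$ for the formula to make sense is a reasonable observation the paper leaves implicit, but it does not change the approach.
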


This result is straightforward. Since there is no path outside the
grand coalition, we have that $min(\price, \Cost^{-\SSS}) = \price$.
The lowest cost path inside $\SSS$ is necessarily the absolute
shortest path $\ShortestPath$.

We will see later that, in the general case, it is not necessary to be
in the grand coalition to reach $\price$ as the cost of the coalition.

\paragraph{\bf Example:}
  Consider the following example, illustrated in
  Figure~\ref{fig:contre_example}. In this example, different vertices'
  shapes represents different owners. Player $\Lambda$ owns two
  vertices, $\alpha$ and $\lambda$, while others own only one vertex
  each. For the sake of simplicity, the set $\SSS$ of players is
  $\{\Lambda, \beta , \delta, \gamma\} $.  We assume that the user
  budget is $\price= 34$.  Since the cost $\Cost_{\ShortestPath}=6$,
  $\fctV(\SSS)=\price - \Cost_{\ShortestPath} = 28$.

  To compute the value $v(\{\Lambda, \gamma\})$, we must compute the
  difference between the cost of the shortest path not containing $\{\Lambda,\gamma\}$, which is
  equal to 23, and the shortest path containing only $\{\Lambda,\gamma\}$,
  which is equal to 6.
 
\begin{figure}
 \centering
\begin{tikzpicture}
                   \tikzstyle{p1}=[circle,draw,fill=red!50]
                       \tikzstyle{p2}=[diamond,draw,fill=green!20]
                       \tikzstyle{p3}=[star,star points=7,draw,fill=blue!20]
                       \tikzstyle{p4}=[draw,fill=yellow!20]
 
\node[rectangle,draw,minimum size=2em] (1) at (0, 1) {$\U$};
\node[circle,draw,fill=red!50] (2) at (1, 2)    {$\alpha$};
\draw (2.north)    node[above]{$2$};
   \node[p1] (3) at (1, 0)   {$\lambda$};
\draw (3.north)    node[above]{$5$};
  \node[p3] (4) at (2, 1)  {$\delta$}; \draw (4.north)    node[above]{$8$};
   \node[p4] (5) at (3, 2)    {$\gamma$}; \draw (5.north)    node[above]{$4$};
   \node[p2] (6) at (3, 0)  {$\beta$}; \draw (6.north)    node[above]{$15$};
    \node[rectangle,draw,minimum size=2em]   (7) at (4, 1) {$\Dest$};
    \draw[->] (1) -- (2);
    \draw[->] (2) -- (5);
    \draw[->] (5) -- (7);
    \draw[->] (1) -- (3);
    \draw[->] (3) -- (6);
    \draw[->] (6) -- (7);
    \draw[->] (4) -- (5);
    \draw[->] (1) -- (4);
    \draw[->] (4) -- (6);
 \end{tikzpicture}
 \caption{Choreography enactment pricing game with empty core.}
 \label{fig:contre_example}
\end{figure}

 We now show that the core of this example is empty.
Its characteristic function is given by:

\[
\begin{array}{ll}
\fctV(\{\Lambda,\delta,\gamma,\beta\}) = 34-6 = 28;    &\fctV(\{\Lambda,\gamma,\delta\}) = 34-6 = 28;   \\
  \fctV(\{\Lambda,\gamma\})=23-6=17;       &\fctV(\{\Lambda,\gamma,\beta\}) = 34-6 = 28;   \\
\fctV(\{\gamma,\delta\}) = 20-12 = 8;   &\fctV(\{\delta,\gamma,\beta\}) = 34-6 = 28;  \\
\fctV(\{\Lambda,\beta,\delta\}) = 34-20 = 14;   &
\end{array}
\]

For all other coalitions the value must be zero; either it does not contain
a valid path or it can be beaten by a shortest path outside the
coalition.

Suppose, by contradiction, that there exists an imputation
$\imputation=(\imputation_\Lambda, \imputation_\gamma, \imputation_\delta, \imputation_\beta)$ that
belongs to the core.  By Definition~\ref{core_def}, from the
efficiency property, we have that
$\imputation_\Lambda + \imputation_\gamma + \imputation_\delta + \imputation_\beta= 28$.
From the group rationality property of the coalition $\{\Lambda,\gamma,\delta\}$, we
have that $\imputation_\Lambda +\imputation_\gamma+ \imputation_\delta  \geq 28$, so
definitely $\imputation_\beta = 0$. Applying the same argument on
coalition $\{\Lambda,\gamma,\beta\}$ results that $\imputation_\delta=0$. Substituting
$\imputation_\delta$ and $\imputation_\beta$ using the group rationality
property on coalition $\{\Lambda,\beta,\delta\}$ results that
$\imputation_\Lambda \geq 14$. Similarly, on coalition $\{\delta,\gamma,\beta\}$, it
yields $\imputation_\gamma,\geq 28$. Thus
$\imputation_\Lambda, +\imputation_\delta \geq 28 + 14 = 42$ which contradicts the
efficiency property. Therefore, the core of this game is empty.

From this example we can notice that both coalitions $\{\Lambda,\gamma,\delta\}$ and
$\{\Lambda,\gamma,\beta\}$ have the same value of the grand coalition. Nodes $\delta$ and
$\beta$ both play the same role, but are still important to the
coalition. In the core, their imputation is zero, by symmetry. Still,
they are important because other mechanisms could be used in order
to redistribute them according to their contributions. In the
general case that a cut of the graph (including, or in addition to the
shortest path) are sufficient to get the same value as the grand
coalition. As we have stated above when the core in empty, it is still
interesting to consider the Bargaining set.

\section{Stable coalitions on the choreography enactment pricing game}

The commerce of services is not regulated. This free market implies
that service vendors are free to create alliances that could
potentially allow them to control the entire market.  In the previous
section we saw that the core of the choreography enactment pricing
game is empty. In this section we show that the bargaining set is
non-empty and we derive an algorithm that is capable of detecting
the formation of such alliances.

In graph theory, a \emph{vertex cut} $\C$ for vertices $\U$ and
$\Dest$ is a set of the vertices such that its removal from graph $\C$
separates $\U$ and $\Dest$ into distinct connected components.  We
focus on a \emph{vertex cut} $\C$ and some useful properties
about function $v(.)$.

\begin{lemma}
  Let $\C$ be a vertex cut in graph $G$. Let $\ShortestPath$ be the
  shortest path between nodes $\U$ and $\Dest$.  Let
  $\C_{\ShortestPath}= \C \cup \ShortestPath$ be a set of vertices. Let consider
  $w \notin \C_{\ShortestPath}$ a vertex. We  have the following property for the vertex's utility function:
  $$\fctV(\C_{\ShortestPath})=\fctV(\C_{\ShortestPath} \cup\{w\})$$
\end{lemma}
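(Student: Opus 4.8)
The plan is to evaluate the characteristic function directly on both sides using the case definition of $\fctV$ and to verify that each side equals $\price-\Cost_{\ShortestPath}$. For any vertex set $\X$ the value $\fctV(\X)$ depends on only two quantities: the cost $\Cost^{\X}$ of the cheapest $\U$--$\Dest$ path built solely from vertices of $\X$, and the cost $\Cost^{-\X}$ of the cheapest $\X$-avoiding path. I would therefore compute both quantities for $\X=\C_{\ShortestPath}$ and for $\X=\C_{\ShortestPath}\cup\{w\}$ and show they coincide.

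First I would control the avoiding cost. Since $\C$ is a vertex cut for $\U$ and $\Dest$, deleting $\C$ leaves $\U$ and $\Dest$ in distinct connected components, so no $\U$--$\Dest$ path survives. Both $\C_{\ShortestPath}$ and $\C_{\ShortestPath}\cup\{w\}$ contain $\C$, hence deleting either set destroys \emph{every} $\U$--$\Dest$ path; by the stated convention $\Cost^{-\C_{\ShortestPath}}=\Cost^{-(\C_{\ShortestPath}\cup\{w\})}=\infty$. In particular the branching condition $\Cost^{-\X}\ge\Cost^{\X}$ holds in both cases, so neither value falls into the ``otherwise'' branch, and $\min(\price,\Cost^{-\X})=\price$ for both.

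Next I would control the internal cost. Because $\ShortestPath\subseteq\C_{\ShortestPath}\subseteq\C_{\ShortestPath}\cup\{w\}$, the globally shortest path $\ShortestPath$ is admissible for each of the two coalitions, giving $\Cost^{\X}\le\Cost_{\ShortestPath}$; conversely any admissible path is a path of $G$ and hence costs at least the global minimum $\Cost_{\ShortestPath}$. Therefore $\Cost^{\C_{\ShortestPath}}=\Cost^{\C_{\ShortestPath}\cup\{w\}}=\Cost_{\ShortestPath}$: enlarging the vertex set can only enlarge the family of admissible paths, yet it can never beat the absolute optimum. Combining the two computations yields $\fctV(\C_{\ShortestPath})=\fctV(\C_{\ShortestPath}\cup\{w\})=\price-\Cost_{\ShortestPath}$, which is the claim, and which (as expected) coincides with $\fctV(\SSS)$.

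The argument involves no delicate estimate; the only point requiring care is conceptual rather than computational. One must read $\C_{\ShortestPath}$ and $\C_{\ShortestPath}\cup\{w\}$ consistently as the vertex sets feeding the definition of $\fctV$, and observe that adding vertices enlarges the set of admissible paths monotonically, so the internal cost is monotonically non-increasing, while $\ShortestPath$ supplies a matching lower bound that pins it exactly at $\Cost_{\ShortestPath}$. The potential pitfall is to fear that inserting $w$ might open a cheaper route; the definition of $\ShortestPath$ as the absolute shortest $\U$--$\Dest$ path rules this out at once.
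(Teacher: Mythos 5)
Your proof is correct and rests on the same key observation as the paper's: both $\C_{\ShortestPath}$ and $\C_{\ShortestPath}\cup\{w\}$ contain the vertex cut $\C$, so no $\U$--$\Dest$ path survives outside either set, and both contain $\ShortestPath$, forcing both values to equal $\price-\Cost_{\ShortestPath}=\fctV(\V)$. You simply spell out in full the one-line argument the paper gives.
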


\begin{proof}
  It is sufficient to notice that outside $\C_{\ShortestPath}$ there
  is no path between nodes $\U$ and $\Dest$ in graph $G$ thus
  $\fctV(\C_{\ShortestPath})=\fctV(\V)$.  \qed
\end{proof}

This implies in the following property about the value function:

\begin{lemma}
  Let $\X$ be a set of players such that
  $\C_{\ShortestPath} =\{w: \owner(w) \in \X\}$ contains a vertex cut
  $\C$ and a be the shortest path $\ShortestPath$ between nodes $\U$
  and $\Dest$. Let us consider $\player \notin \X$ a player. Then, the
  following property for the value function holds:
  $$\fctV(\X)=\fctV(\X \cup\{w\})$$
\end{lemma}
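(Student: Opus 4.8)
The plan is to reduce this player-level identity to the vertex-level lemma just proved, exploiting the fact that the value of a coalition is determined entirely by the set of vertices that coalition controls. Indeed, by~\eqref{eq:fctV} and~\eqref{eq:fctV-coalition} the quantity $\fctV(\X)$ is assembled from the per-vertex utilities of the vertices $\{w:\owner(w)\in\X\}$, and the derived form of the characteristic function shows these depend only on the two costs $\Cost^{\X}$ and $\Cost^{-\X}$ --- that is, only on which vertices $\X$ controls. I would therefore work throughout with the controlled set $\C_{\ShortestPath}=\{w:\owner(w)\in\X\}$, which by hypothesis contains both a vertex cut $\C$ and the shortest path $\ShortestPath$.

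First I would pin down what changes when the extra player is admitted. Since $\owner$ assigns a unique owner to each vertex and $\player\notin\X$, every vertex owned by $\player$ lies outside $\C_{\ShortestPath}$; hence the controlled set of $\X\cup\{\player\}$ is $\C_{\ShortestPath}$ together with vertices disjoint from it, and it still contains the cut $\C$ and the path $\ShortestPath$. The previous lemma --- more precisely its proof, which derives $\fctV(\C_{\ShortestPath})=\fctV(\V)$ from the fact that no $\U$--$\Dest$ path survives the removal of a set containing the cut $\C$ --- applies verbatim to \emph{any} vertex set containing $\C$ and $\ShortestPath$. Applying it once to the controlled set of $\X$ and once to that of $\X\cup\{\player\}$ gives $\fctV(\X)=\fctV(\V)$ and $\fctV(\X\cup\{\player\})=\fctV(\V)$, so the desired equality follows immediately.

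For a more self-contained confirmation I would substitute directly into the derived characteristic function. Because the controlled set contains $\ShortestPath$, the coalition can already offer the globally cheapest route, so $\Cost^{\X}=\Cost_{\ShortestPath}$; because it contains the cut $\C$, every $\U$--$\Dest$ path meets a controlled vertex, so no avoiding path exists, $\Cost^{-\X}=\infty$, and $\min(\price,\Cost^{-\X})=\price$, giving $\fctV(\X)=\price-\Cost_{\ShortestPath}$. The identical computation for $\X\cup\{\player\}$, whose controlled set still contains $\C$ and $\ShortestPath$, returns the same value. The one step deserving care --- and the only real content beyond routine substitution --- is the passage between the player level at which the statement is phrased and the vertex level at which the prior lemma and the costs $\Cost^{\X},\Cost^{-\X}$ live; it rests entirely on $\owner$ being single-valued, which guarantees that enlarging the coalition by $\player$ adds only vertices outside $\C_{\ShortestPath}$ and so cannot perturb either cost.
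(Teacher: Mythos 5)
Your argument is correct and follows exactly the route the paper intends: the paper gives no explicit proof for this lemma, presenting it only as an immediate consequence ("This implies\dots") of the preceding vertex-level lemma, and your reduction --- observing that $\fctV$ depends only on the controlled vertex set, that adding $\player\notin\X$ only adds vertices outside $\C_{\ShortestPath}$, and that any controlled set containing the cut and $\ShortestPath$ already achieves $\fctV(\V)=\price-\Cost_{\ShortestPath}$ --- is precisely the missing derivation. Your direct computation via $\Cost^{\X}=\Cost_{\ShortestPath}$ and $\Cost^{-\X}=\infty$ is a welcome extra check but adds nothing beyond the paper's approach.
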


\begin{lemma}
  \label{lem:simple}

  Let $\mathcal{G}(G,\fctV,\price)$ be a game. Let $\imputation$ be a
  feasible stable imputation. For each player $j$ in $\SSS$ such that
  $ \Cost^j \geq \price$, we have $\imputation_j=0$.
\end{lemma}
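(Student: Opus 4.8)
The plan is to pin down $\imputation_j$ by proving the two one-sided bounds $\imputation_j\ge 0$ and $\imputation_j\le 0$ separately. First I would unpack the hypothesis: reading $\Cost^j$ as the cost of the cheapest $\U$--$\Dest$ path that uses a service owned by $j$, the condition $\Cost^j\ge\price$ says that no enactment the user would ever accept passes through $j$. Since the global shortest path satisfies $\Cost_{\ShortestPath}\le\price\le\Cost^j$, none of $j$'s vertices lie on $\ShortestPath$. Consequently $j$ is a \emph{dummy} for $\fctV$: for every coalition $\X$ one has $\fctV(\X\cup\{j\})=\fctV(\X)$. This is exactly the content of the two preceding lemmas --- adding a player all of whose vertices sit outside $\C_{\ShortestPath}=\C\cup\ShortestPath$ cannot lower the cost $\Cost^{\X}$ of the path the coalition offers, nor alter $\min(\price,\Cost^{-\X})$, because any path through $j$ already costs at least $\price$.

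The lower bound is then immediate. Specialising the dummy identity to $\X=\emptyset$ gives $\fctV(\{j\})=0$: any path a singleton $\{j\}$ could offer runs through $j$ and hence costs at least $\Cost^j\ge\price$, so its value $\min(\price,\Cost^{-\{j\}})-\Cost^{\{j\}}\le 0$. Because $\imputation$ is an imputation it is individually rational, whence $\imputation_j\ge\fctV(\{j\})=0$.

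For the upper bound I would push the dummy property to the grand coalition, obtaining $\fctV(\SSS\setminus\{j\})=\fctV(\SSS)=\price-\Cost_{\ShortestPath}$. Suppose, for contradiction, that $\imputation_j>0$. Efficiency gives $\sum_{k\in\SSS\setminus\{j\}}\imputation_k=\fctV(\SSS)-\imputation_j<\fctV(\SSS\setminus\{j\})$, so the coalition $\Objection=\SSS\setminus\{j\}$ is strictly under-served by $\imputation$. Any member $i\in\Objection$ can therefore raise an objection $(\Objection,y)$ against $j$: set $y_k=\imputation_k$ for $k\ne i$ and hand the surplus to the objector, $y_i=\imputation_i+\imputation_j$, so that $y(\Objection)=\fctV(\Objection)$, every member is weakly better off and $i$ is strictly better off. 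Tracing the definitions, a counter-objection $(\ContreObjection,z)$ forces $z_k\ge\imputation_k$ for every $k\in\ContreObjection$ (since $i\notin\ContreObjection$ makes $y_k=\imputation_k$ there, and $\ContreObjection\setminus\Objection=\{j\}$ requires $z_j\ge\imputation_j$), so a counter-objection can exist only if some $\ContreObjection\ni j$ with $i\notin\ContreObjection$ satisfies $\imputation(\ContreObjection)\le\fctV(\ContreObjection)$. Using $\fctV(\ContreObjection)=\fctV(\ContreObjection\setminus\{j\})$, this means some $R\subseteq\SSS\setminus\{i,j\}$ is under-served by at least $\imputation_j$, i.e. $\fctV(R)-\imputation(R)\ge\imputation_j$.

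The heart of the argument --- and the step I expect to be the main obstacle --- is to choose the objector $i$ so that no such $R$ exists, yielding an objection with no counter-objection and contradicting stability. The natural choice is to take $i$ on $\ShortestPath$: any $R$ avoiding $i$ (and the dummy $j$) cannot reproduce $\ShortestPath$, so $\Cost^{R}\ge\Cost^{-i}>\Cost_{\ShortestPath}$ and hence $\fctV(R)\le\price-\Cost^{-i}<\fctV(\SSS)$. The difficulty is that this bounds $\fctV(R)$ but not the \emph{deficit} $\fctV(R)-\imputation(R)$ below $\imputation_j$; one must control how much grand-coalition surplus an $i$-avoiding coalition can retain and combine it with individual rationality to force the deficit under $\imputation_j$. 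This is precisely where the empty core bites: one cannot simply invoke coalitional rationality of $\SSS\setminus\{j\}$ (no core allocation exists), so the whole upper bound must be carried out inside the objection/counter-objection calculus, and the careful selection of the objecting player $i$ is the technical crux.
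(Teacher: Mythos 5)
Your proposal follows the same architecture as the paper's proof --- establish that $j$ is a dummy ($\fctV(\X)=\fctV(\X\setminus\{j\})$ for every coalition $\X$), get $\imputation_j\ge 0$ from individual rationality, and then, assuming $\imputation_j>0$, raise an objection $(\SSS\setminus\{j\},y)$ and argue that $j$ has no counter-objection. But the step you yourself flag as ``the main obstacle'' and ``the technical crux'' --- choosing the objector $i$ and proving that no counter-objection exists --- is precisely the content of the paper's proof, and you leave it unresolved. As written, your argument is therefore incomplete: it is a correct reduction of the lemma to an open sub-claim, not a proof.

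The paper closes this gap as follows. It picks $i^*$ to be a player lying on the shortest path $\ShortestPath$ and maximizing $\Cost^{-i^*}$ among such players, and then shows that \emph{every} coalition $\ContreObjection$ with $j\in\ContreObjection$ and $i^*\notin\ContreObjection$ has $\fctV(\ContreObjection)=0$: any path offered from inside $\ContreObjection$ avoids $i^*$ and so costs at least $\Cost^{-i^*}$, while a path at least as cheap survives outside the coalition, so $\min(\price,\Cost^{-\ContreObjection})-\Cost^{\ContreObjection}\le 0$. This also shows that your final worry --- that one must bound the ``deficit'' $\fctV(R)-\imputation(R)$ below $\imputation_j$ by some delicate surplus accounting --- is a red herring. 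A counter-objection must satisfy $z_j\ge\imputation_j>0$ and $z_k\ge y_k\ge\imputation_k\ge 0$ for the remaining members of $\ContreObjection$, hence $z(\ContreObjection)>0$; so it is enough to know that $\fctV(\ContreObjection)=0$, with no comparison to $\imputation$ restricted to $\ContreObjection$ needed. In short: your skeleton matches the paper's, your lower bound and the dummy-player reduction are fine, but the lemma is not proved until the existence of a counter-objection-proof objector $i^*$ is actually established, and that is the one step you did not supply.
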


\begin{proof}
  We prove this lemma by contradiction. Assume that there exists a
  player $j$ in $\SSS$ such that $\Cost^j \geq \price$ and
  $\imputation_j > 0$.

  There exists at least a player $i$ such that $ \Cost^i \leq \price$ (otherwise, the user would not choose any path since all prices of the path are greater than $\price$: $\imputation_j=0$ for all $j \in \SSS$).   Therefore
  $\Cost_{\ShortestPath} \leq \price$ and there exists at most one vertex $w$ such
  that $\Cost^w = \Cost_{\ShortestPath}$. So player $\owner(w)$ has  $\Cost^{\owner(w)} = \Cost_{\ShortestPath}$.

  Let $i^*$ be a player such that $\Cost^{-i} \geq \Cost^{-k}$ for each
  $k\in\SSS$ and such that $\Cost^{k}= \Cost_{\ShortestPath}$.
  First,  Equations~\ref{eq:fctV}
  and~\ref{eq:fctV-coalition} state that, for a given coalition $\X$, we have
   $\fctV(\X) > 0 $ if $ \Cost^{\X} < \Cost^{-\X}$,  otherwise
    $\fctV(\X) = 0$.

 Since $\Cost^{j} \geq \price$, each coalition $\X$ satisfies
 the following property: $\fctV( \X) = \fctV(\X\backslash\{j\})$.

 Player $i^*$ could make an objection of $(\SSS\backslash\{j\},y)$
 against node $j$ such that
 $y_k=\imputation_k+\frac{\imputation_j}{|\SSS|-1}$ for
 $k\in \SSS\backslash\{j\}$. Note that for any $k$, $y_k > x_k$ since
 $x_j>0$. Now consider two cases:

 \begin{enumerate}
 \item $\mathbf{\Cost^{-i^*} = \Cost_{\ShortestPath}}$. This implies that
   $\fctV(\SSS\backslash\{j\}) = 0$ and
   $\fctV(\SSS\backslash\{i^*\}) = 0$.
 \item $\mathbf{\Cost^{-i^*} > \Cost_{\ShortestPath}}$. This implies
   that $\Cost^{(\SSS\backslash\{i^*\})}> \Cost_{\ShortestPath}$. As
   consequence of the definition of $\fctV(.)$, we have
   $\fctV(\SSS\backslash\{i^*\}) =0$.
 \end{enumerate}

 Thus, $\fctV(\SSS\backslash\{i^*\}) = 0$. Player $j$ cannot make a
 counter-objection $(\ContreObjection,z)$ against $i$ because
 $\fctV(\ContreObjection\backslash\{i^*\}) = 0$ for all coalitions
 $\ContreObjection$ not containing $i^*$.  This contradicts the
 assumption that $\imputation$ is stable.  \qed
\end{proof}

Let's focus on service providers that will receive non null
retribution.

\begin{lemma}\label{lem:6.3}
  Let $\mathcal{G}(G,\fctV,\price)$ be a game.  Let $\imputation$ be a
  feasible imputation.  Let $i$ (similarly $j$) a player such that
  $ \Cost^i < \price$ (similarly $ \Cost^j < \price$).  Let
  $\mathcal{O}=\SSS\backslash\{j\}$.

  Let $(\mathcal{O},y)$ be an
  objection of $i$ against $j$. In order to have a counter-objection
  to $(\ContreObjection, z)$, with
  $\ContreObjection= \SSS\backslash\{i\}$ of $j$ against $i$, a
  sufficient condition is:
\begin{equation}
  \label{eq:relation}
  \imputation_j -\imputation_i  \leq  \left(\Cost^{-i}-\Cost^{-j}\right)
\end{equation}
 \end{lemma}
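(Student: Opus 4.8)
The plan is to unfold the two definitions and reduce the existence of the counter-objection to a single budget inequality between two coalition values, and then to evaluate those values. First I would write down exactly what the counter-objection $(\ContreObjection,z)$ with $\ContreObjection=\SSS\backslash\{i\}$ must satisfy. Since $\ContreObjection\backslash\mathcal{O}=\{j\}$ and $\ContreObjection\cap\mathcal{O}=\SSS\backslash\{i,j\}$, the vector $z$ must obey $z_j\geq\imputation_j$, then $z_k\geq y_k$ for every $k\in\SSS\backslash\{i,j\}$, and finally the feasibility constraint $z(\ContreObjection)\leq\fctV(\ContreObjection)$. Such a $z$ exists if and only if the sum of the forced lower bounds fits inside the budget, i.e. if and only if $\imputation_j+\sum_{k\in\SSS\backslash\{i,j\}}y_k\leq\fctV(\SSS\backslash\{i\})$ (the coordinate $z_i$ is free and irrelevant, as it does not enter $z(\ContreObjection)$). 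So the whole statement collapses to establishing this one inequality under the hypothesized bound on $\imputation_j-\imputation_i$.

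Next I would eliminate the unknown $y$-coordinates using the objection's own budget. From $y(\mathcal{O})\leq\fctV(\mathcal{O})$ together with $y_i>\imputation_i$ I obtain $\sum_{k\in\SSS\backslash\{i,j\}}y_k = y(\mathcal{O})-y_i < \fctV(\SSS\backslash\{j\})-\imputation_i$. Substituting this into the budget test, it suffices to prove $\imputation_j-\imputation_i+\fctV(\SSS\backslash\{j\})\leq\fctV(\SSS\backslash\{i\})$; the strict gap coming from $y_i>\imputation_i$ is precisely what allows the stated bound to be assumed with a non-strict $\leq$ while still leaving room for a feasible $z$. Rewritten, the target is $\imputation_j-\imputation_i\leq\fctV(\SSS\backslash\{i\})-\fctV(\SSS\backslash\{j\})$.

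The decisive step is to compute these two coalition values. Here I would invoke the characteristic function~(\ref{eq:fctV:vertex}) together with the structural fact that a single excluded owner cannot by itself realize a full $\U$--$\Dest$ path (the model forbids a valid path with fewer than two services), so the complement of $\SSS\backslash\{k\}$ contains no alternative path and $\min(\price,\Cost^{-(\SSS\backslash\{k\})})=\price$. The cheapest path the coalition $\SSS\backslash\{k\}$ can still offer is exactly the cheapest $k$-avoiding path of cost $\Cost^{-k}$, and the hypotheses $\Cost^{i}<\price$ and $\Cost^{j}<\price$ place both values in the non-zero branch of $\fctV$. This yields $\fctV(\SSS\backslash\{k\})=\price-\Cost^{-k}$ for $k\in\{i,j\}$, so the sufficient condition becomes a bound on $\imputation_j-\imputation_i$ by the difference of the two avoiding-path costs $\Cost^{-i}$ and $\Cost^{-j}$, which is the relation in the statement.

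The main obstacle I expect is this final evaluation rather than the bookkeeping: one must argue carefully that removing a single player leaves no outside $\U$--$\Dest$ path, so that the $\min$ collapses to $\price$, and that the hypotheses $\Cost^{i},\Cost^{j}<\price$ really do force the positive branch of the characteristic function. Everything else is a direct comparison of the minimal counter-objection demand against $\fctV(\SSS\backslash\{i\})$, with the strict inequality $y_i>\imputation_i$ absorbing the gap so that equality in the stated bound still produces a valid counter-objection. I would also keep an eye on the labelling of $i$ and $j$ in $\mathcal{O}=\SSS\backslash\{j\}$ versus $\ContreObjection=\SSS\backslash\{i\}$, since the sign of $\Cost^{-i}-\Cost^{-j}$ in the conclusion is entirely dictated by which player is excluded from which coalition.
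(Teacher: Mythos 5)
Your strategy is essentially the paper's own: reduce the existence of the counter-objection to the single budget inequality $\imputation_j+\sum_{k\in\SSS\setminus\{i,j\}}y_k\leq\fctV(\SSS\setminus\{i\})$, bound the middle sum via the objection's feasibility together with the strict gap $y_i>\imputation_i$, and evaluate $\fctV(\SSS\setminus\{k\})=\price-\Cost^{-k}$ by observing that a single excluded player cannot supply an outside $\U$--$\Dest$ path (the paper phrases this as ``$i$ cannot be simultaneously adjacent to $\U$ and $\Dest$,'' so both $\min$ terms collapse to $\price$). Your bookkeeping is in fact cleaner than the paper's about which coordinates of $z$ are forced and exactly where the strict inequality is consumed.

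The one genuine problem is the sign of the conclusion. Your chain yields
\begin{equation*}
\imputation_j-\imputation_i\;\leq\;\fctV(\SSS\setminus\{i\})-\fctV(\SSS\setminus\{j\})\;=\;\bigl(\price-\Cost^{-i}\bigr)-\bigl(\price-\Cost^{-j}\bigr)\;=\;\Cost^{-j}-\Cost^{-i},
\end{equation*}
which is the \emph{opposite} of inequality~\eqref{eq:relation} as printed, yet you assert that it ``is the relation in the statement.'' You cannot have both: either you exhibit where an extra sign flip comes from, or you conclude that the displayed condition is mis-signed. For what it is worth, the sign you derive is the one consistent with the rest of the paper: applying your inequality to $(i,j)$ and to $(j,i)$ gives $\imputation_i-\imputation_j=\Cost^{-i}-\Cost^{-j}$, which is exactly what the proof of Theorem~\ref{th:5} uses and what the worked example satisfies ($\imputation_\gamma-\imputation_\Lambda=16-8=\Cost^{-\gamma}-\Cost^{-\Lambda}=20-12$); the paper's own proof performs the same unexplained flip between its Equation~\eqref{eq:inter:3} and its final sentence. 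So your derivation is sound and the lemma's displayed inequality appears to be the typo, but a careful write-up must flag the discrepancy rather than declare a match that is not there.
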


\begin{proof}
  Assume that there is an objection $(\mathcal{O},y)$ of player $i$
  against player $j$. By definition of objection, we have that
  $\forall k \in \mathcal{O}$, $y_k \geq \imputation_k$ and
  $y_i > \imputation_i$. So $0< \fctV(\mathcal{O})$ and thus
  $\Cost^{\mathcal{O}}\leq \Cost^{-\mathcal{O}}$ by the definition of
  the value of a coalition
  ($\fctV(\mathcal{O}) = \min
  (\price,\Cost^{-\mathcal{O}})-\Cost^{\mathcal{O}}$).

  The value of coalition $\mathcal{O}$ depends on the lowest-price
  $j$-avoiding path from $\U$ to $\Dest$:
  $\Cost^{\mathcal{O}} =\Cost^{-j} $. Therefore:

  \begin{equation} \label{eq:3:bis}
    \Cost^{-j}\leq \min (\price,\Cost^{-\mathcal{O}}).
  \end{equation}

  Now, we will look for a counter-objection of player $j$ using
  $(\ContreObjection, z)$, where
  $\ContreObjection= \SSS\backslash\{i\}$. Its value of coalition
  depends on the lowest-price $i$-avoiding path from $\U$ to $\Dest$:
  $ \Cost^{\ContreObjection} =\Cost^{-i}$. Since
  $\fctV(\ContreObjection) =\min
  (\price,\Cost^{-\ContreObjection})-\Cost^{\ContreObjection}$,
  we have:

 \begin{center}
   $ \Cost^{\ContreObjection}\leq \Cost^{-\ContreObjection}$ and
   $\fctV(\ContreObjection) =\min
   (\price,\Cost^{-\ContreObjection})-\Cost^{-i}$
 \end{center}

 By combining the two previous equations with Equation~\eqref{eq:3:bis}, we obtain:

 \begin{equation}
   \label{eq:inter:1}
   \begin{array}{rl}
     \fctV(\ContreObjection) - \fctV(\mathcal{O})  &= \min (\price,\Cost^{-\ContreObjection})-\Cost^{-i} -(\min (\price,\Cost^{-\mathcal{O}})-\Cost^{\mathcal{O}} )\\
     \fctV(\ContreObjection) - \fctV(\mathcal{O})  &\leq \left(\Cost^{-j}- \Cost^{-i} - \min (\price,\Cost^{-\mathcal{O}})+\min (\price,\Cost^{-\ContreObjection}) \right)
   \end{array}
 \end{equation}

From the definition of the game, player $i$ (resp. $j$) cannot be
   simultaneously adjacent to $\U$ and $\Dest$ (otherwise a monopoly would be possible.) This implies that
 $\min (\price,\Cost^{-\mathcal{O}})=\min
 (\price,\Cost^{-\ContreObjection})$.

 Let $N$ be a set of vertices such that $N =\SSS \backslash\{i,j\}$
 Since $\fctV(\mathcal{O}) = \sum_{k=1}^{|N|}y_k +y_i$ and
 $\fctV(\ContreObjection) = \sum_{k=1}^{|N|}z_k + z_y$,
 Equation~\eqref{eq:inter:1} can be rewritten as:

 \begin{equation}\label{eq:inter:2}
   \sum\limits_{k\in \ContreObjection}z_k -\sum\limits_{k\in \mathcal{O}}y_k  \leq
   \left(\Cost^{-j}- \Cost^{-i}\right)
 \end{equation}

 Let us now show that $\ContreObjection$ is an counter-objection
 where, for each $k \in \ContreObjection\backslash \mathcal{O}$,
 $z_k \geq \imputation_k$ and for each $k \in S\backslash\{i,j\}$,
 $z_k \geq y_k$. It is sufficient to consider that:

 \begin{equation}\label{eq:inter:3}
   z_j- y_i  \leq  \left(\Cost^{-j}- \Cost^{-i}\right)
 \end{equation}

 From the definition of objection, it is sufficient to have:
 $\imputation_j -\imputation_i \leq \left(\Cost^{-i}-
   \Cost^{-j}\right)$. This concludes the proof of the lemma. \qed
\end{proof}

Lemma~\ref{lem:simple} shows that a feasible imputation $\imputation$ is stable if,
for each player $j$ in $S$ such that $\Cost^j \geq \price$, we have
$\imputation_j=0$. We will now compute the value of
$\imputation_j$ for all players $j$ such that $\Cost^j \geq \price$. We
will define the set  of players $A$ such that
$A=\{j\in \SSS: \Cost^j \leq \price\}$.

\begin{theorem}\label{th:5}
  Let $\mathcal{G}(G,\fctV,\price)$ be a game.  Let $A$ be a subset of players $\{j\in \SSS: \Cost^j \leq \price\}$. There exists a unique stable imputation $\imputation$ if $\imputation$ fulfils all the three following conditions:
  \begin{enumerate}
  \item $ \forall j \in \SSS\backslash A $, $\imputation_j=0$;
  \item
    $ \displaystyle \forall j \in A, \imputation_j =
    \frac{\fctV(\SSS)}{|A|} + \left(\Cost^{-j}-\frac{ \displaystyle
        \sum_{k\in A}\Cost^{-k}}{|A|} \right)$; and,
  \item
    $\forall j \in A, \Cost^{-j} \geq \frac{\Cost_{\ShortestPath} -
      \price + \displaystyle \sum_{k\in A}\Cost^{-k}}{|A|}$.
  \end{enumerate}

\end{theorem}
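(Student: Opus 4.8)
The plan is to treat conditions 1 and 2 as the closed form of a single stable imputation and condition 3 as the feasibility certificate, so the proof splits into: (i) pinning the payoffs of the inactive players to zero, (ii) showing stability forces a rigid linear relation among the active payoffs, (iii) solving the resulting system uniquely, and (iv) reading condition 3 as individual rationality. First I would dispatch condition 1 directly: every $j\in\SSS\backslash A$ has $\Cost^j>\price$, so Lemma~\ref{lem:simple} immediately gives $\imputation_j=0$. Hence only the players of $A$ can be paid, and by efficiency $\sum_{j\in A}\imputation_j=\fctV(\SSS)=\price-\Cost_{\ShortestPath}$ (using the corollary on the value of the grand coalition).

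The core of the argument is to show that stability forces the quantity $\imputation_j-\Cost^{-j}$ to be the same for every $j\in A$. For any ordered pair $i,j\in A$ (both with $\Cost^i,\Cost^j<\price$), Lemma~\ref{lem:6.3} supplies a sufficient condition under which the canonical objection $(\SSS\backslash\{j\},y)$ of $i$ against $j$ admits a counter-objection, namely $\imputation_j-\imputation_i\leq \Cost^{-j}-\Cost^{-i}$. Applying this symmetrically, i.e. also to the objection of $j$ against $i$, yields simultaneously $\imputation_j-\imputation_i\leq \Cost^{-j}-\Cost^{-i}$ and $\imputation_i-\imputation_j\leq \Cost^{-i}-\Cost^{-j}$, and therefore the equality $\imputation_j-\Cost^{-j}=\imputation_i-\Cost^{-i}$. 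Thus $\imputation_j-\Cost^{-j}\equiv\kappa$ is constant on $A$.

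I would then pin down $\kappa$ by efficiency: summing $\imputation_j=\kappa+\Cost^{-j}$ over $j\in A$ and equating to $\fctV(\SSS)$ gives $\kappa=\frac{\fctV(\SSS)}{|A|}-\frac{\sum_{k\in A}\Cost^{-k}}{|A|}$, and substituting back reproduces exactly the closed form of condition 2. Since the $|A|-1$ difference equations together with the single efficiency equation form a full-rank linear system, its solution is unique, which gives uniqueness of $\imputation$. Finally, condition 3 is simply the requirement $\imputation_j\geq 0$: a single player cannot offer a valid two-service path, so $\fctV(\{j\})=0$ and individual rationality reads $\imputation_j\geq 0$; rearranging $\frac{\fctV(\SSS)}{|A|}+\Cost^{-j}-\frac{\sum_{k\in A}\Cost^{-k}}{|A|}\geq 0$ with $\fctV(\SSS)=\price-\Cost_{\ShortestPath}$ yields precisely condition 3. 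So conditions 1--2 determine $\imputation$ uniquely and condition 3 certifies that this vector is a genuine imputation.

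The hard part will be the necessity direction that underlies uniqueness. Lemma~\ref{lem:6.3} only furnishes a \emph{sufficient} condition for a counter-objection, and through the single coalition $\SSS\backslash\{i\}$; to conclude that a stable imputation must satisfy $\imputation_j-\Cost^{-j}=\imputation_i-\Cost^{-i}$ I must show that whenever this equality fails there is a \emph{realizable} objection with \emph{no} counter-objection at all. That forces me to rule out every counter-objecting coalition containing $j$ but not $i$, not merely the canonical one, and at the same time to verify that the objection itself fires --- that $\SSS\backslash\{j\}$ has strictly positive value and lets $i$ strictly improve. This is where the hypotheses $\Cost^j\leq\price$ and the no-monopoly condition $\min(\price,\Cost^{-\mathcal{O}})=\min(\price,\Cost^{-\ContreObjection})$ from Lemma~\ref{lem:6.3} must be used carefully; handling these realizability constraints and the universal quantifier over all counter-objections is the delicate step of the proof.
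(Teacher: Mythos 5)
Your proposal follows essentially the same route as the paper: condition (1) from Lemma~\ref{lem:simple}, the pairwise equality $\imputation_i-\imputation_j=\Cost^{-i}-\Cost^{-j}$ obtained by applying Lemma~\ref{lem:6.3} to $(i,j)$ and to $(j,i)$, summation plus efficiency to get the closed form of condition (2), and condition (3) read as nonnegativity of $\imputation_j$. The ``delicate step'' you flag --- that Lemma~\ref{lem:6.3} gives only a \emph{sufficient} condition for a counter-objection, so the necessity of the pairwise equality (and hence uniqueness) would require exhibiting an uncounterable objection whenever the equality fails --- is a genuine issue, but the paper's own proof makes exactly the same silent leap, so your attempt reproduces its argument faithfully.
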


\begin{proof}
  Property (1) can be straightforward deduced from Lemma~\ref{lem:simple}.

  The bargaining set is the set of all imputations that do not admit a
  justified objection. So, if we apply Theorem~\ref{th:5} to $i,j$ and
  then to $j,i$, then we can derive that for any couple $(i,j) \in S^2$, we
  have $\imputation_i - \imputation_j = (\Cost^{-i} - \Cost^{-j})$.

  Let $j$ be a player in $A$. By summing all the previous equations, we
  obtain:
  \begin{equation}\label{eq:inter:4}
    \sum_{k\in A }\imputation_i -|A|\imputation_j  = \left(\sum_{k\in A} \Cost^{-k}-|A|\Cost^{-j}\right)
  \end{equation}

  From the properties of the value of the coalition and by computation, we can rewrite Equation~\eqref{eq:inter:4} as:

  \begin{equation}\label{eq:inter:5}
    \forall j \in A, \; \imputation_j  =  \frac{\fctV(\SSS)}{|A|} +
    \left(\Cost^{-j}-\frac{  \sum_{k\in A}\Cost^{-k}}{|A|}
    \right)
  \end{equation}

  Property (3) can be deduced from the fact $\imputation_j\geq 0$ and
  from Equation~\eqref{eq:inter:5}.\qed
\end{proof}

Those results are the technical framework that allows the detection of
coalitions on our game. Given a game $\mathcal{G}(G,\fctV,\price)$, we
can detect if a set of service providers $\SSS$ --- whose operational
costs are given by $\Cost$, but the announced prices are
$\announcedprice$ --- are currently forming a coalition.
Algorithm~\ref{algo:detection} presents the pseudo-code for the
coalition detection for a given game.

\begin{algorithm}

\KwIn{$\mathcal{G}(G,\fctV,\price)$, $\SSS$, $\Cost$, and $\announcedprice$}
\KwOut{Whether there is a coalition or not.}

compute the lowest cost path $\ShortestPath$\;

\ForAll{players $\player \in \SSS$}{compute $\price^{-\player}$}

compute $A = \{\player \in \SSS \mid c^{-\player} \le \price\}$\;

coalition = true\;
\ForAll{players $\player \in A$}{
  compute $x_\player$\;
  \If{$x_\player \ne \Cost_\player - \announcedprice_\player$}{coalition = false}%
}

\Return{coalition}

\caption{Coalition detection algorithm.}
\label{algo:detection}
\end{algorithm}

Now we will establish the relation between the price and the different costs when there is a stable imputation. Using Theorem~\ref{th:5}, we will compute the lower bound for non-empty bargaining sets.

\begin{theorem}\label{th:6}
  Let $\mathcal{G}(G,\fctV,\price)$ be a game. There exists a stable imputation $\imputation$ if and only if:
\begin{equation}
   \price \geq
      \sum_{k\in B}\Cost^{-k} -  (|B| -1 ) \Cost_{\ShortestPath}
 \end{equation}
 where $B= \{j\in B: \Cost^j = \Cost_{\ShortestPath} \land \Cost^{-j} >\Cost_{\ShortestPath}\}$.
\end{theorem}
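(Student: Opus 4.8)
The plan is to read the result straight off Theorem~\ref{th:5}, which already gives a complete characterisation: its conditions (1) and (2) determine the unique candidate payoff vector $\imputation$ (their necessity was established in the proof of Theorem~\ref{th:5} from the relation $\imputation_i-\imputation_j=\Cost^{-i}-\Cost^{-j}$), so the existence of a stable imputation is equivalent to the single feasibility condition (3). It therefore suffices to show that condition (3) is equivalent to the inequality claimed here, and the whole argument is essentially a rearrangement of (3).

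The first key remark is that the right-hand side of condition~(3), namely $\tfrac{1}{|A|}\left(\Cost_{\ShortestPath}-\price+\sum_{k\in A}\Cost^{-k}\right)$, is independent of $j$. Hence (3) holds for every $j\in A$ if and only if it holds for the player minimising $\Cost^{-j}$ over $A$, and I would next compute $\min_{j\in A}\Cost^{-j}$. Deleting a player cannot make the cheapest $\U$--$\Dest$ path cheaper, so $\Cost^{-j}\ge\Cost_{\ShortestPath}$ for every $j$, with equality precisely for the non-essential players (those admitting a shortest path that avoids them). Since $B$ is exactly the essential players on the shortest path, i.e. those with $\Cost^{j}=\Cost_{\ShortestPath}$ and $\Cost^{-j}>\Cost_{\ShortestPath}$, every $k\in A\setminus B$ satisfies $\Cost^{-k}=\Cost_{\ShortestPath}$, and in particular $\min_{j\in A}\Cost^{-j}=\Cost_{\ShortestPath}$.

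Substituting this minimiser into (3) turns it into $|A|\,\Cost_{\ShortestPath}\ge \Cost_{\ShortestPath}-\price+\sum_{k\in A}\Cost^{-k}$, that is
\[
  \price \ \ge\ \sum_{k\in A}\Cost^{-k}-(|A|-1)\,\Cost_{\ShortestPath}.
\]
I would then split the sum along $A=B\cup(A\setminus B)$ (note $B\subseteq A$, using $\Cost_{\ShortestPath}\le\price$) and insert $\Cost^{-k}=\Cost_{\ShortestPath}$ for $k\in A\setminus B$, which gives $\sum_{k\in A}\Cost^{-k}=\sum_{k\in B}\Cost^{-k}+(|A|-|B|)\Cost_{\ShortestPath}$. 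Plugging this in and cancelling the multiples of $\Cost_{\ShortestPath}$ collapses the bound to $\price\ge\sum_{k\in B}\Cost^{-k}-(|B|-1)\Cost_{\ShortestPath}$, which is exactly the stated inequality. Running the same chain of equalities in reverse recovers condition~(3) from the inequality, so the two are equivalent and the ``if and only if'' follows.

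The step I expect to be the main obstacle is the identity $\min_{j\in A}\Cost^{-j}=\Cost_{\ShortestPath}$: it requires $A\setminus B\neq\emptyset$, i.e. at least one budget-feasible player that can be avoided at no extra cost. This holds generically (whenever a second shortest path, or any alternative within-budget path, exists), but the degenerate case in which every budget-feasible player is essential, so that $A=B$ and some $\Cost^{-k}$ may even equal $\infty$, behaves differently: there the tightest instance of (3) is governed by $\min_{j\in B}\Cost^{-j}$ rather than by $\Cost_{\ShortestPath}$, and the objection/counter-objection structure degenerates (a single player cannot form a valid path, so its coalition cannot sustain an objection). I would therefore treat this case separately, checking whether the stated bound survives or whether Theorem~\ref{th:6} tacitly assumes the existence of an avoidable budget-feasible player; making that hypothesis explicit is where I expect most of the care to be needed.
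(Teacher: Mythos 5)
Your proposal follows essentially the same route as the paper: instantiate condition (3) of Theorem~\ref{th:5} at a player of $A$ outside $B$ (for whom $\Cost^{-i}=\Cost_{\ShortestPath}$), then rearrange using $\Cost^{-k}=\Cost_{\ShortestPath}$ for $k\in A\setminus B$ to collapse the sum over $A$ to the sum over $B$. The degenerate case you flag ($A\setminus B=\emptyset$) is also left implicit in the paper, which simply writes ``let $i$ be a player not in $B$'' without justifying its existence, so your extra care there is a point in your favour rather than a divergence.
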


\begin{proof}
 Let $A=\{j\in \SSS: \Cost^j \leq \price\}$. From Theorem~\ref{th:5}, we
 have:
\[\forall j \in A,    \Cost^{-j}   \geq \frac{
     \Cost_{\ShortestPath} - \price   + \sum_{k\in A}\Cost^{-k}}{|A|}\]

 Let $B$ be a subset of $A$ such that $j\in B$ if
 $\Cost^{-j} > \Cost_{\ShortestPath}$.  Note that each player $j$
 belongs to a shortest path. Let $i$ be a player not in $B$. Thus, we
 have:

\begin{equation}
 \Cost^{-i} \geq \frac{
     \Cost_{\ShortestPath} - \price   + \sum_{k\in A}\Cost^{-k}}{|A|}
 \end{equation}

Since $\Cost^{-i} = \Cost_{\ShortestPath}$, the previous equation can be rewritten as:

\begin{equation}
  |A| \Cost_{\ShortestPath} \geq
     \Cost_{\ShortestPath} - \price   + \sum_{k\in A}\Cost^{-k}
 \end{equation}

From the definition of set $B$:
\begin{equation}
  \price   \geq
     \sum_{k\in B}\Cost^{-k} -  (|B| -1 ) \Cost_{\ShortestPath}
 \end{equation}
\end{proof}

To illustrate, let us reconsider the example of Figure~\ref{fig:contre_example}. In this example, all players on set $A = \{ \Lambda, \delta, \beta, \gamma \}$ engage in the coalition.
\[
\begin{array}{llll}
\Cost_{\ShortestPath}= 6&  p=34  &\fctV(\SSS) =28 & \\
\Cost^{-\Lambda} = 12 & \Cost^{-\delta} = 6  &\Cost^{-\beta} = 6 &   \Cost^{-\gamma} = 20  \\
 \sum_{k\in A}\Cost^{-k}{|A|} = 44 & & \\
\imputation_{\Lambda} = 8 & \imputation_{\delta} = 2  &\imputation_{\beta} =2 &   \imputation_{\gamma} = 16  \\
\end{array}
\]

The unique stable imputation for this example is: $(\imputation_{\Lambda}, \imputation_{\delta} ,\imputation_{\beta} ,\imputation_{\gamma} )= (8,2,2,16)$. Note that in this example, $B=\{\Lambda,\gamma\}$,  in order to have a imputation on the bargaining set, the threshold to have a non-empty bargaining set  must be $p \geq 20$.

\section{Comparison with a truthful mechanism for lowest-cost routing}

Several works (see \cite{BGP}, for example) focus on the problem of
inter-domain routing from a mechanism-design point of view. The
mechanism-design principles applied for the routing problem is the
subject of seminal works by Nisan and Ronen~\cite{Nisan} and
Hershberger and Suri~\cite{VCG}.

Feigenbaum et al. \cite{BGP} provided a polynomial-time strategy
proof mechanism for optimal route selection in a centralized
computational model (inspired from \cite{Nisan}).  In their
formulation, the network is modeled as an abstract graph $G=(V,E)$.
Each vertex $\fctV$ of the graph is an agent and has a private type
$t_\fctV$, which represents the cost of a message transit through this
node. The mechanism-design goal is to find a lowest-cost path
$\Objection$ between two designated nodes $\U$ and $\Dest$. The
valuation of an agent $\fctV$ is $-t_\fctV$ if $\fctV$ is part of
$\Objection$ and $0$ otherwise.

Nisan and Ronen give the following simple mechanism for the problem:
the payment to agent $\fctV$ is equal to $0$ if $\fctV$ is not in $\Objection$,
and is equal to $d_{G | c_v = \infty}- d_{G | c_v = 0}$ if $\fctV$ is in
$\Objection$ where $d_{G | c_v = \alpha}$ is the cost of the
lowest-cost path through $G$ when the cost of $\fctV$ is $\alpha$.

This mechanism ensures that the dominant strategy for each agent
$\fctV$ is to always report its true type $t_\fctV$ to the
mechanism. Such a mechanism is said to be \emph{truthful}. When all
agents honestly report their costs, the lowest-cost path is selected.
This algorithmic mechanism design problem is solved
using the well-known VGC mechanism~\cite{Vickrey,Groves,Clarke}.

We focus on the lowest-cost routing problem: the instance is composed
of $G=(V,E)$ and a type vector $t=(t_1,\dots,t_m)$. The goal is to
find a lowest-cost path $\Objection$ between two designated nodes $\U$
and $\Dest$. We will build a coalition game
$\mathcal{G}(G,\fctV,\price)$. Each node $\fctV$ in $\SSS$ has its
cost $\Cost_\fctV$ equal to $t_\fctV$.

\begin{theorem}
  The total payment of a truthful mechanism in the lowest-cost path
  between a end-user and a destination is equal to the problem of
  finding the minimal value for $\price$, such that there exists a
  stable coalition for the choreography enactment pricing game.
\end{theorem}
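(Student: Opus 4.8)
\section*{Proof plan for the final theorem}

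The plan is to write the total payment of the VCG (Nisan--Ronen) mechanism in the paper's own cost notation and to show that it is, term for term, the right-hand side appearing in Theorem~\ref{th:6}. In the lowest-cost routing instance each node is its own agent, so every player owns exactly one vertex; under this identification the mechanism's quantity $d_{G\mid c_v=\infty}$ (the cost of the best route when $v$ is made unusable) is exactly $\Cost^{-v}$, while $d_{G\mid c_v=0}$ (the cost when $v$ is free) equals $\Cost_{\ShortestPath}-\Cost_v$. The latter holds because $\Cost^{-v}\ge \Cost_{\ShortestPath} > \Cost_{\ShortestPath}-\Cost_v$ forces the minimizing route to keep passing through $v$ rather than bypassing it. Hence the payment to an agent $v$ lying on the selected shortest path is
\begin{equation*}
\price_v = \Cost^{-v}-\bigl(\Cost_{\ShortestPath}-\Cost_v\bigr) = \Cost^{-v}-\Cost_{\ShortestPath}+\Cost_v,
\end{equation*}
and $0$ for every agent off the path.

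First I would sum these payments over the vertices of $\ShortestPath$ and separate them according to monopoly power. An agent belongs to $B=\{v:\Cost^v=\Cost_{\ShortestPath}\wedge \Cost^{-v}>\Cost_{\ShortestPath}\}$ exactly when $\Cost^{-v}>\Cost_{\ShortestPath}$, i.e.\ it sits on every shortest path and is therefore essential; the remaining path vertices satisfy $\Cost^{-v}=\Cost_{\ShortestPath}$ and so each contributes only its own cost $\Cost_v$ to the total. Collecting terms gives
\begin{equation*}
\sum_{v\in\ShortestPath}\price_v = \sum_{v\in B}\Cost^{-v}-|B|\,\Cost_{\ShortestPath}+\sum_{v\in\ShortestPath}\Cost_v.
\end{equation*}
The key simplification is the defining identity $\sum_{v\in\ShortestPath}\Cost_v=\Cost_{\ShortestPath}$, which absorbs the last sum together with one factor of $\Cost_{\ShortestPath}$ and yields the total payment
\begin{equation*}
\sum_{v\in\ShortestPath}\price_v = \sum_{v\in B}\Cost^{-v}-(|B|-1)\,\Cost_{\ShortestPath}.
\end{equation*}

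The proof then concludes by observing that this expression is verbatim the right-hand side of Theorem~\ref{th:6}, which states that a stable coalition exists if and only if $\price$ is at least this quantity; consequently the minimal feasible budget $\price$ equals the total VCG payment. I expect the main obstacle to be the bookkeeping for the non-essential path vertices (those in $\ShortestPath\setminus B$, where $\Cost^{-v}=\Cost_{\ShortestPath}$): one must verify that each such agent is paid exactly its declared cost $\Cost_v$ — so that its profit is zero, consistent with having no monopoly power — and that these cost terms recombine with the costs of the essential vertices of $B$ to reconstitute $\Cost_{\ShortestPath}$ precisely once. A secondary point requiring care is the justification that $d_{G\mid c_v=0}=\Cost_{\ShortestPath}-\Cost_v$ rather than some cheaper bypassing route, which rests on $\Cost^{-v}\ge\Cost_{\ShortestPath}$, together with the explicit translation between the mechanism's per-node agents and the game's single-vertex players, so that $\Cost^{-v}$ carries the same meaning on both sides of the claimed equality.
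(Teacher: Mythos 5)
Your proposal is correct and follows essentially the same route as the paper: express each VCG payment as $\Cost^{-v}-d_{G\mid c_v=0}=\Cost^{-v}-\Cost_{\ShortestPath}+\Cost_v$, sum over the chosen shortest path, observe that only the essential vertices (those in $B$) earn positive profit, and match the resulting total $\sum_{v\in B}\Cost^{-v}-(|B|-1)\Cost_{\ShortestPath}$ against the threshold of Theorem~\ref{th:6}. Your bookkeeping for the non-essential path vertices and your justification that $d_{G\mid c_v=0}=\Cost_{\ShortestPath}-\Cost_v$ are in fact tidier than the paper's own algebra, which mishandles the $+t_v$ term and states the final formula with $|\Objection|$ in place of the correct reduction to $B$.
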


\begin{proof}
  If each agent $\fctV$ reports its true type $t_\fctV$ to the
  mechanism, then a lowest-cost path $\Objection$ is chosen. The total
  payment is equal to
  $\sum\limits_{v\in \Objection} d_{G | c_v = \infty}- d_{G | c_v =
    0}$.

  By definition, $d_{G | c_v = \infty} =\Cost^{-v}$. Therefore:
\[
\begin{array}{ll}
  \sum\limits_{v\in \Objection} \left( d_{G | c_v =
    \infty}- d_{G | c_v = 0} +t_v\right) &=\sum\limits_{v\in \Objection}
\left( \Cost^{-v} -    d_{G | c_v = 0} +t_v\right) \\
& =\sum\limits_{v\in \Objection}
\Cost^{-v} -  |\Objection| \Cost_{\ShortestPath}
\end{array}
\]

The total payment is given by $\sum\limits_{v\in \Objection}
\Cost^{-v} -  (|\Objection|-1) \Cost_{\ShortestPath}$.

Note that if $v \in \Objection$, then $d_{G | c_v = 0} +t_v = \Cost^v =
\Cost_{\ShortestPath}$. Moreover, if  $d_{G | c_v = \infty} = \Cost^{-j} =
\Cost_{\ShortestPath}$, then the payment is equal to $0$.

Let $B= \{j\in \Objection\}: \Cost^j =  \Cost_{\ShortestPath} \land
\Cost^{-j} >\Cost_{\ShortestPath}\}$. Only payments given to agents in
$B$ are strictly greater than $0$. So the total payment is equal to
the minimal value such that there exists a stable coalition for game
$\mathcal{G}(G,\fctV,\price)$, where $\price=\sum\limits_{v\in \Objection} \Cost^{-v} -  |\Objection| \Cost_{\ShortestPath}$.  \qed
\end{proof}

In the general case we have shown that there exists a unique
imputation for which there is no justified objection. This imputation
exists, provided that the end-user is willing to pay a maximum price
that makes this imputation possible. This
price is exactly the total payment that agents would have received if
an auction had taken place and a truthful mechanism had been used.

\section{Related work}

The choreography enactment pricing game is similar to fair resource
allocation and networking games. Fragnelli et al.~\cite{FragnelliGM00}
studied a related cooperative game that they called the shortest path
games. Their game models agents willing to transport a good through a
network from a source to a destination. Using a graph model, and
letting agents to control the nodes, they have studied how profits
should be allocated according to the core of the cooperation.

Compared to the shortest path game, our model extends the problem by
considering that agents could possibly control several nodes --- a
vendor could offer several different services. Also, in our game, no
node can be part of all shortest paths at the same time, the opposite
of the notion of $s$-veto players introduced by Fragnelli et al. They
have focused their studies on the conditions for the existence of a
non-empty core and on the Shapley value of the game.

Maintaining the assumption of $s$-veto players, Voorneveld and
Grahn~\cite{VoorneveldG02} extended the shortest path game and proved
that the core allocations coincide with the payoff vectors in the
strong Nash equilibria of the associated non-cooperative shortest path
game.

Several subsequent papers \cite{Aziz2011,Nebel11} studied
computability and complexity aspects of this game.  Some
properties of graphs and games guaranteeing the existence of a core
have been proposed and the computability complexity of computing cores
have been established (NP-complete and \#P-complete). Other variants
(different payoffs and players controlling arcs) have been considered,
but mostly focusing on the existence and complexity of cores, whereas
this work mostly focus on the construction of the bargaining set in
polynomial time.

The flow game can be view as maximum multicommodity flow problem in a
cooperative setting. This
model can be used to identify the set of demands to satisfy and to
route this demand on the network. In this context, players own network
resources and share a capacity to deliver commodities. Kalai et
al.~\cite{Kalai1982} first considered flow games for network with a
single commodity, where a unique player owns an arc. Several studies
(for example, \cite{Derks1985,MarkakisS03,AgarwalE08}) extended this
seminal work. Those extensions encompass variations on the number of
arcs a player can control, if the player controls all or a part of the
capacity of the arc, if players control vertices, etc. Those papers
mainly focus on how to obtain the optimal flow in the network and then
on how to allocate the revenue using core allocation techniques (since
those games have non-empty cores).


\section{Conclusion}
This work presents a game-theoretic model for the problem we call the
choreography enactment pricing problem. Vendors offer different services
to users, but the lack of regulation of the market can lead to the
formation of alliances. 

We show that this game may have an empty core, but still has
a non-empty bargaining set. The study of the conditions that can lead
our cooperative game to a unique stable imputation resulted in a new
coalition detection algorithm.
We also show that finding the minimal user budget that leads to a
stable coalition in the choreography enactment pricing game is
equivalent to the problem of finding the total payment of a
truthful mechanism in the lowest-cost problem between a end-user and a
destination.

As future work, we are investigating the impact of allowing server
multitenancy. We are working on adding capacities to the services so
that a user can potentially enact more than one instance of the
service without paying more. This result would generalize the problem
for services on any kind of cloud computing platform.

\bibliographystyle{splncs03}
\bibliography{coalition,state-of-the-art}

\end{document}